\newtheorem{acknowledgement}{Acknowledgement}
\newtheorem{cor}{Corollary}
\newenvironment{corollary}{\bf\begin{cor}\rm\em}{\end{cor}} 
\newtheorem{prop}{Proposition}
\newenvironment{proposition}{\bf\begin{prop}\rm\em}{\end{prop}} 
\newenvironment{proof}[1][Proof]{\textbf{#1.} }{\ \rule{0.5em}{0.5em}}
\begin{document}

\title{\huge What frequency bandwidth to run cellular network in a given country? --- a downlink dimensioning problem}
\author{Bart{\l }omiej~B{\l }aszczyszyn$^\dagger$  and Mohamed~K.~Karray$^\ast$\\[-10ex]}

\maketitle

\begin{abstract}
We propose an analytic approach to the frequency bandwidth dimensioning problem, faced by cellular network operators who deploy/upgrade their networks in various geographical regions (countries) with an inhomogeneous urbanization. We present a model allowing one to capture fundamental relations between users' quality of service parameters (mean downlink throughput), traffic demand, the density of base station deployment, and the available frequency bandwidth. These relations depend on the applied cellular technology (3G or 4G impacting user peak bit-rate) and on the path-loss characteristics observed in different (urban, sub-urban and rural) areas. We observe that if the distance between base stations is kept inversely proportional to the distance coefficient of the path-loss function, then the performance of the typical cells of these different areas is similar when serving the same (per-cell) traffic demand. In this case, the frequency bandwidth dimensioning problem can be solved uniformly across the country applying the mean cell approach proposed in~\cite{BlaszczyszynJK2014TypicalCell}. We validate our approach by comparing the analytical results to measurements in operational networks in various geographical zones of different countries.

\end{abstract}


\begin{IEEEkeywords}
cellular network; dimensioning; bandwidth; spatial heterogeneity,
QoS-homogeneous networks
\end{IEEEkeywords}

\let\thefootnote\relax\footnotetext{\hspace{-2ex}$^\dagger$Inria/Ens, 23
  av. d'Italie 75214 Paris, France; Bartek.Blaszczyszyn@ens.fr\\ 
$^\ast$Orange Labs, 38/40 rue G\'{e}n\'{e}ral Leclerc, 92794
  Issy-Moulineaux, France; mohamed.karray@orange.com}
\newcommand{\thefootnote}{\arabic{footnote}}

\section{Introduction}

A systematic increase of the traffic in wireless cellular networks leads to a
potential degradation of the users'~quality of service (QoS). In order to
prevent such degradation, network operators (besides upgrading the technology)
keep adding new base stations and/or supplementary frequency bandwidth.
Planning of this network dimensioning process requires the knowledge of the
relations between the QoS parameters, traffic demand, density of base station
deployment, and the operated frequency bandwidth. Such relations are usually
developed for homogeneous network models representing, separately, a typical
city or a rural area. To the best of our knowledge, a global approach to some
larger geographical zone, as e.g. some given country, typically with an
inhomogeneous urbanization, has not yet been proposed. This is the main goal
of this paper.

Our work is based on~\cite{BlaszczyszynJK2014TypicalCell}, where a
typical-cell approach was proposed to the aforementioned study of the QoS,
with the information-theoretic characterization of the peak-bit rate, the
queueing-theoretic evaluation of the cell performance, and a
stochastic-geometric approach to irregular but homogeneous network deployment.
In this paper we extend this approach to an inhomogeneous network deployment,
i.e., a network which is supposed to cover urban, sub-urban and rural areas,
for which path-loss characteristics and the network deployment densities are different.

A key observation that we make in this regard is that if the distance between
neighbouring base stations is kept inversely proportional (equivalently, the
square root of the network density is proportional) to the distance
coefficient of the path-loss function, then the typical cells of these
different areas exhibit the same \textquotedblleft response\textquotedblright%
\ to the traffic demand. Specifically, they exhibit the same relation between
the mean user throughput and the per-cell traffic demand. Thus, one can say
that this inhomogeneous network offers to its users homogeneous
\textquotedblleft QoS response\textquotedblright\ across all different areas.
More precisely: cells in rural areas will be larger than these in urban areas,
and the traffic demand per surface can be different, with urban areas
generally serving more traffic. However, in the case when the mean traffic
demand per cell is equal, these areas will \textquotedblleft
offer\textquotedblright\ the same user's throughput.

As a consequence, the bandwidth dimensioning problem can be addressed and
solved for one particular, say urban, area and applied to the whole network,
appropriately adapting the density of base station in sub-urban and rural
areas to the specific value of the distance coefficient in the path-loss function.

We validate our approach by comparing the obtained results to measurements in
3G and 4G operational networks in different countries in Europe, Africa and Asia.

\subsection{Related works}

The evaluation of the performance of cellular networks is a complex problem,
but crucial for network operators. It motivates a lot of engineering and
research studies. The complexity of this problem made many actors develop
complex and time consuming simulation tools such as those developed by the
industrial contributors to 3GPP (\emph{3rd Generation Partnership
Project})~\cite{3GPP36814-900}. There are many other simulation tools such as
TelematicsLab LTE-Sim~\cite{Piro2011}, University of Vien LTE
simulator~\cite{Mehlfuhrer2011,Simko2012} and LENA
tool~\cite{Baldo2011,Baldo2012} of CTTC, which are not necessarily compliant
with 3GPP.

A possible analytical approach to this problem relies on the information
theoretic characterization of the individual link performance; cf
e.g.~\cite{GoldsmithChua1997,Mogensen2007}, in conjunction with a queueing
theoretic modeling and analysis of the user traffic; cf.
e.g.~\cite{Borst2003,BonaldProutiere2003,HegdeAltman2003,BonaldBorstHegdeJP2009,RongElayoubiHaddada2011,KarrayJovanovic2013Load}%
. Recently,~\cite{BlaszczyszynJK2014TypicalCell,hetnets} proposes an approach
combining queueing theory and stochastic geometry allowing to study the
dependence of the mean user throughput on the traffic demand in a typical
city. This analytical approach is validated with field measurements at the
scale of a city. At this scale, not only the network average metrics but also
their spatial distributions in the considered, homogeneous, zones are well
predicted in~\cite{BlaszczyszynJK2014QoS_Distribution}. This prior work does
not consider inhomogeneous networks, observed at the scale of a larger
geographical zone or a whole country. In particular, the simulation tools are
too time consuming to be applicable at this scale.

\subsection{Paper organization}

We describe our general cellular network model, comprising the geometry of
base stations, propagation loss model, traffic dynamics, and service policy in
Section~\ref{s.ModelDescription}. In Section~\ref{s.ModelAnalysis} we propose
a model permitting to get the relation of users'~QoS to the key network
parameters at a country scale. This approach is validated by comparing, in
Section~\ref{s.NumericalResults}, the obtained results to measurements in
various operational networks in different countries. We show also how to solve
the bandwidth dimensioning problem.

\section{Homogeneous model description}

\label{s.ModelDescription} In this section we briefly recall the model
from~\cite{BlaszczyszynJK2014TypicalCell}.

\subsection{Network geometry and propagation}

The network is composed of base stations (BS) whose locations are modelled by
a stationary point process $\Phi=\left\{  X_{n}\right\}  _{n\in\mathbb{Z}}$ on
$\mathbb{R}^{2}$ with intensity $\lambda>0$. We assume that $\Phi$\ is
stationary and ergodic. Each base station $X_{n}$\ emits a power denoted by
$P_{n}$. We assume that $\left\{  P_{n}\right\}  _{n\in\mathbb{Z}}$\ are
i.i.d. marks of $\Phi$.

The propagation loss comprises a deterministic effect depending on the
distance between the transmitter and receiver called \emph{path-loss}, and a
random effect called \emph{shadowing}. The path-loss is modeled by a function
\begin{equation}
l\left(  x\right)  =\left(  K\left\vert x\right\vert \right)  ^{\beta},\quad
x\in\mathbb{R}^{2}, \label{e.DistanceLoss}%
\end{equation}
where $K>0$\ and $\beta>2$\ are two parameters called \emph{distance
coefficient} and \emph{exponent} respectively.\ The shadowing between BS
$X_{n}$\ and all the locations $y\in\mathbb{R}^{2}$\ is modelled by a
stochastic process $S_{n}\left(  y-X_{n}\right)  $. We assume that the
processes $\left\{  S_{n}\left(  \cdot\right)  \right\}  _{n\in\mathbb{Z}}%
$\ are i.i.d. marks of $\Phi$.

The inverse of the power received at location $y$ from BS $X_{n}$ is denoted
by
\[
L_{X_{n}}\left(  y\right)  =\frac{l\left(
y-X_{n}\right)  }{P_{n}S_{n}\left(  y-X_{n}\right)  },
\quad y\in\mathbb{R}^{2},n\in\mathbb{Z}%
\]
and will be called, with a slight abuse of terminology, \emph{propagation
loss} between $y$ and $X_{n}$.

In order to simplify the notation, we shall omit the index $n$ of BS $X_{n}$.
Each BS $X\in\Phi$ serves the locations where the received power is the
strongest among all the BS; that is
\begin{equation}
V\left(  X\right)  =\left\{  y\in\mathbb{R}^{2}:L_{X}\left(  y\right)  \leq
L_{Y}\left(  y\right)  \text{ for all }Y\in\Phi\right\}  \label{e.Cell}%
\end{equation}
called \emph{cell} of $X$.

The signal-to-interference-and-noise (SINR) power ratio in the \emph{downlink}
for a user located at $y\in V\left(  X\right)  $ equals
\begin{equation}
\mathrm{SINR}\left(  y,\Phi\right)  =\frac{1/L_{X}\left(  y\right)  }%
{N+\sum_{Y\in\Phi\backslash\left\{  X\right\}  }\varphi_{Y}/L_{Y}\left(
y\right)  }, \label{e.SINR}%
\end{equation}
where $N$ is the noise power and each $\varphi_{Y}\in\left[  0,1\right]  $\ is
some \emph{interference factor} accounting for the activity of BS\ $Y$ in a
way that will be made specific in Section~\ref{s.CellCharacteristics}. We
assume that $\left\{  \varphi_{Y}\right\}  _{Y\in\Phi}$ are additional (not
necessarily independent) marks of the point process $\Phi$.

A single user served by BS $X$ and located at $y\in V(X)$\ gets a bit-rate
$R\left(  \mathrm{SINR}\left(  y,\Phi\right)  \right)  $, called \emph{peak
bit-rate}, which is some function of the SINR given by~(\ref{e.SINR}).
Particular form of this function depends on the actual technology used to
support the wireless link.

\subsection{Traffic dynamics}

We consider variable bit-rate (VBR) traffic; i.e., users arrive to the network
and require to transmit some volume of data at a bit-rate decided by the
network. Each user arrives at a location uniformly distributed and requires to
transmit a random volume of data of mean $1/\mu$. The duration between the
arrivals of two successive users in each zone of surface $S$ is an exponential
random variable of parameter $\gamma\times S$ (i.e. there are $\gamma
$\ arrivals per surface unit). The arrival locations, inter-arrival durations
as well as the data volumes are assumed independent of each other. We assume
that the users don't move during their calls. The traffic demand \emph{per
surface unit} is then equal to
$\rho={\gamma}/{\mu}$,
which may be expressed in bit/s/km$^{2}$.

The traffic demand \emph{to a given cell} equals
\begin{equation}
\rho\left(  X\right)  =\rho\left\vert V\left(  X\right)  \right\vert ,\quad
X\in\Phi,\label{e.TrafficDemand}%
\end{equation}
where $\left\vert A\right\vert $\ is the surface of $A$.

We shall assume that each user in a cell gets an equal portion of time for his
service. Thus when there are $k$ users in a cell, each one gets a bit-rate
equal to his peak bit-rate divided by $k$. More explicitly, if a base station
located at $X$ serves $k$ users located at $y_{1},y_{2},\ldots,y_{k}\in
V\left(  X\right)  $\ then the bit-rate of the user located at $y_{j}$\ equals
$\frac{1}{k}R\left(  \mathrm{SINR}\left(  y_{j},\Phi\right)  \right)  $,
$j\in\left\{  1,2,\ldots,k\right\}  $.

\subsection{Cell performance metrics}

\label{s.CellCharacteristics}We consider now the stationary state of the
network in the long run of the call arrivals and departures. Using queuing
theory tools, it is proven in~\cite[Proposition~1]{KarrayJovanovic2013Load}\ that:

\begin{itemize}
\item Each base station $X\in\Phi$\ can serve the traffic demand within its
cell if this latter doesn't exceed some \emph{critical} value which is the
harmonic mean of the peak bit-rate over the cell; that is
\begin{equation}
\rho_{\mathrm{c}}\left(  X\right)  :=\left\vert V\left(  X\right)  \right\vert
\left[  \int_{V\left(  X\right)  }1/R\left(  \mathrm{SINR}\left(
y,\Phi\right)  \right)  \mathrm{d}y\right]  ^{-1}, \label{e.CriticalTraffic}%
\end{equation}

\item The mean user throughput in cell $V\left(  X\right)  $ equals
\begin{equation}
r\left(  X\right)  =\max(\rho_{\mathrm{c}}\left(  X\right)  -\rho\left(
X\right)  ,0). \label{e.UserThroughput}%
\end{equation}

\item The mean number of users in cell $V\left(  X\right)  $ equals
\begin{equation}
N\left(  X\right)  =\frac{\rho\left(  X\right)  }{r\left(  X\right)  }.
\label{e.UsersNumber}%
\end{equation}

\item Moreover, we define the \emph{cell load} as
\begin{equation}
\theta\left(  X\right)  =\frac{\rho\left(  X\right)  }{\rho_{\mathrm{c}%
}\left(  X\right)  }=\rho\int_{V\left(  X\right)  }1/R\left(  \mathrm{SINR}%
\left(  y,\Phi\right)  \right)  \mathrm{d}y. \label{e.Load1}%
\end{equation}

\item The probability that the base station has at least one user to serve (at
a given time) equals
\begin{equation}
p\left(  X\right)  =\min\left(  \theta\left(  X\right)  ,1\right).
\label{e.Proba}%
\end{equation}

\end{itemize}

We assume that a BS transmits only when it serves at least one user. Then, as
proposed in~\cite{BlaszczyszynJK2014TypicalCell}, we take $\varphi_{Y}=p(Y)$
in the SINR expression~(\ref{e.SINR}). Thus~(\ref{e.Load1}) becomes
\begin{equation}
\theta\left(  X\right)  =\rho\int_{V\left(  X\right)  }1/R\left(
\frac{1/L_{X}\left(  y\right)  }{N+\sum_{Y\in\Phi\backslash\left\{  X\right\}
}\frac{\min\left(  \theta\left(  Y\right)  ,1\right)  }{L_{Y}\left(  y\right)
}}\right)  \mathrm{d}y, \label{e.LoadEquations}%
\end{equation}
which is a system of equations with unknown cell loads $\left\{  \theta\left(
X\right)  \right\}  _{X\in\Phi}$.

\subsection{Network performance metrics}

\label{s.MeanCell}The network performance metrics are defined by averaging
spatially over all the cells in the network in an appropriate
way~\cite{BlaszczyszynJK2014TypicalCell}. In particular, it follows from the
ergodic theorem for point processes~\cite[Theorem~13.4.III]%
{DaleyVereJones2003}\ that the average cell load equals
\[
\lim_{|A|\rightarrow\infty}\frac{\sum_{X\in\Phi\cap A}\theta(X)}{\Phi
(A)}=\mathbf{E}^{0}\left[  \theta\left(  0\right)  \right]  =\frac{\rho
}{\lambda}\mathbf{E}\left[  \frac{1}{R\left(  \mathrm{SINR}\left(
0,\Phi\right)  \right)  }\right],
\]
where $A$ is a ball centered at the origin having radius increasing to
infinity and $\mathbf{E}^{0}$\ is the expectation with respect to the Palm
probability associated to $\Phi$. The second equality in the above equation is
proved in~\cite[Proposition~3]{BlaszczyszynJK2014TypicalCell}.

Moreover, it is observed in~\cite{BlaszczyszynJK2014TypicalCell} that the
network performance metrics are well approximated by the \emph{mean cell}
model. Specifically, we define a virtual cell having traffic demand $\bar
{\rho}:=\mathbf{E}^{0}\left[  \rho\left(  0\right)  \right]  $ and load
$\bar{\theta}:=\mathbf{E}^{0}\left[  \theta\left(  0\right)  \right]  $. For
other performance metrics, the mean cell mimics the behavior of the true cells
given in Section~\ref{s.CellCharacteristics}; that is it has critical traffic
demand deduced from~(\ref{e.Load1})
\[
\bar{\rho}_{\mathrm{c}}:=\frac{\bar{\rho}}{\bar{\theta}},%
\]
mean user's throughput deduced from~(\ref{e.UserThroughput})
\[
\bar{r}:=\max\left(  \bar{\rho}_{\mathrm{c}}-\bar{\rho},0\right),
\]
and the mean number of users deduced from~(\ref{e.UsersNumber})
\[
\bar{N}:=\frac{\bar{\rho}}{\bar{r}}.%
\]

The load equations~(\ref{e.LoadEquations}) become for the mean cell
\[
\bar{\theta}=\frac{\rho}{\lambda_{\mathrm{BS}}}\mathbf{E}\left[  1/R\left(
\frac{1/L_{X^{\ast}}\left(  0\right)  }{N+\bar{\theta}\sum_{Y\in\Phi
\backslash\left\{  X\right\}  }1/L_{Y}\left(  0\right)  }\right)  \right],
\]
where $X^{\ast}$\ is the location of the BS whose cell covers the origin.

\section{From scaling equations to inhomogeneous networks}

\label{s.ModelAnalysis} We begin by studying some scaling laws observed in
homogeneous networks. Then we will introduce some inhomogeneous networks which
are able to offer homogeneous QoS response to the traffic demand.

\subsection{Scaling laws for homogeneous networks}

\label{ss.Scaling} Consider a homogeneous network model descried in
Section~\ref{s.ModelDescription}. For $\alpha>0$ consider a network obtained
from this original one
by scaling the base station locations $\Phi^{\prime}=\left\{  X^{\prime
}=\alpha X\right\}  _{X\in\Phi}$, the traffic demand intensity $\rho^{\prime
}=\rho/\alpha^{2}$, distance coefficient $K^{\prime}=K/\alpha$ and shadowing
processes $S_{n}^{\prime}\left(  y\right)  =S_{n}\left(  \frac{y}{\alpha
}\right)  $, while preserving the original (arbitrary) marks (interference
factors) $\varphi_{X^{\prime}}^{\prime}=\varphi_{X}$, $X\in\Phi$ and powers
$P_{n}^{\prime}=P_{n}$. For the rescaled network consider the cells
$V^{\prime}(X^{\prime})$ given by~(\ref{e.Cell}) and their characteristics
$\rho^{\prime}(X^{\prime})$, $\rho_{\mathrm{c}}^{\prime}(X^{\prime})$,
$r^{\prime}(X^{\prime})$, $N^{\prime}(X^{\prime})$, $\theta^{\prime}%
(X^{\prime})$, $p^{\prime}(X^{\prime})$ calculated as
in~(\ref{e.TrafficDemand}), (\ref{e.CriticalTraffic}), (\ref{e.UserThroughput}%
), (\ref{e.UsersNumber}), (\ref{e.Load1}) and~(\ref{e.Proba}), respectively.

\begin{proposition}
\label{p.Scaling} For any $X^{\prime}\in\Phi^{\prime}$, we have
$V^{\prime}\left(  \alpha X_{n}\right)  =\alpha V\left(  X_{n}\right)  $ while
$\rho^{\prime}(X^{\prime})=\rho(X)$, $\rho_{\mathrm{c}}^{\prime}(X^{\prime
})=\rho_{\mathrm{c}}(X)$, $r^{\prime}(X^{\prime})=r(X)$, $N^{\prime}%
(X^{\prime})=N(X)$, $\theta^{\prime}(X^{\prime})=\theta(X)$.
\end{proposition}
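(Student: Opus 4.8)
The plan is to reduce the whole proposition to one identity: the scale-invariance of the propagation loss, $L'_{\alpha X}(\alpha z)=L_{X}(z)$ for every $X\in\Phi$ and every $z\in\mathbb{R}^{2}$. I would verify this first. Writing $y=\alpha z$ and using $X'=\alpha X$, $K'=K/\alpha$, $S'_{n}(\cdot)=S_{n}(\cdot/\alpha)$ and $P'_{n}=P_{n}$, one gets $l'(y-X')=\big((K/\alpha)\,\alpha|z-X|\big)^{\beta}=(K|z-X|)^{\beta}=l(z-X)$ and $S'_{n}(y-X')=S_{n}\big(\alpha(z-X)/\alpha\big)=S_{n}(z-X)$, so the ratio defining $L'_{X'}(y)$ is literally the one defining $L_{X}(z)$. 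Everything else should then drop out by the change of variables $y=\alpha z$.

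From this, the cell identity is immediate: for $y=\alpha z$ the inequality $L'_{\alpha X}(y)\le L'_{\alpha Y}(y)$ is equivalent, by the invariance, to $L_{X}(z)\le L_{Y}(z)$, so~(\ref{e.Cell}) gives $y\in V'(\alpha X)\iff z\in V(X)$, i.e. $V'(\alpha X)=\alpha V(X)$ and hence $|V'(\alpha X)|=\alpha^{2}|V(X)|$. Together with $\rho'=\rho/\alpha^{2}$ and~(\ref{e.TrafficDemand}) this yields at once $\rho'(X')=\rho'|V'(X')|=\rho|V(X)|=\rho(X)$. Next I would substitute the loss invariance into the SINR formula~(\ref{e.SINR})---the noise $N$ and the interference marks $\varphi$ being unchanged, and the interference sum being re-indexed from $\Phi'\setminus\{\alpha X\}$ to $\Phi\setminus\{X\}$ via $Y'=\alpha Y$---to conclude that the SINR itself is invariant, $\mathrm{SINR}'(\alpha z,\Phi')=\mathrm{SINR}(z,\Phi)$, and therefore so is the peak bit-rate, $R(\mathrm{SINR}'(\alpha z,\Phi'))=R(\mathrm{SINR}(z,\Phi))$.

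The remaining identities are then mechanical. In the critical traffic~(\ref{e.CriticalTraffic}) the substitution $y=\alpha z$, $\mathrm{d}y=\alpha^{2}\,\mathrm{d}z$ turns $\int_{V'(\alpha X)}1/R(\mathrm{SINR}'(y,\Phi'))\,\mathrm{d}y$ into $\alpha^{2}\int_{V(X)}1/R(\mathrm{SINR}(z,\Phi))\,\mathrm{d}z$, and since $|V'(\alpha X)|=\alpha^{2}|V(X)|$ the two factors $\alpha^{2}$ cancel, giving $\rho_{\mathrm{c}}'(X')=\rho_{\mathrm{c}}(X)$; the same substitution in~(\ref{e.Load1}) gives $\theta'(X')=\theta(X)$. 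Finally $r'(X')=\max(\rho_{\mathrm{c}}'(X')-\rho'(X'),0)=r(X)$ from~(\ref{e.UserThroughput}) and $N'(X')=\rho'(X')/r'(X')=N(X)$ from~(\ref{e.UsersNumber}), both purely algebraic.

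I do not expect a genuine obstacle---the proof is bookkeeping. The two points that need care are that the noise power $N$ is \emph{not} rescaled, which is exactly why the traffic intensity has to be scaled as $\rho'=\rho/\alpha^{2}$ (jointly with $K'=K/\alpha$ and $S_{n}'(\cdot)=S_{n}(\cdot/\alpha)$) for the SINR to stay invariant; and that the area dilation $\alpha^{2}$ appearing both in $|V'(\alpha X)|$ and in the Jacobian of $y=\alpha z$ must be matched against the $1/\alpha^{2}$ in $\rho'$ at each step. One may also note that if the original marks are the self-consistent loads $\varphi_{X}=\min(\theta(X),1)$, then $\theta'(X')=\theta(X)$ forces $\varphi'_{X'}=\varphi_{X}$ to solve the load equations~(\ref{e.LoadEquations}) of the rescaled network as well, so the scaling is consistent with the fixed-point model of Section~\ref{s.CellCharacteristics}.
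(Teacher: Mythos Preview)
Your proof is correct and follows essentially the same route as the paper: establish the propagation-loss invariance $L'_{\alpha X}(y)=L_{X}(y/\alpha)$, deduce the cell dilation and SINR invariance, and then pass the change of variable $y=\alpha z$ through the integrals in~(\ref{e.CriticalTraffic}) and~(\ref{e.Load1}). Your closing remark about the self-consistent loads is exactly the content of the corollary that the paper states immediately after the proposition.
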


\begin{proof}
Observe that
\[
L_{\alpha X_{n}}^{\prime}\left(  y\right) \!=\!\frac{P_{n}S_{n}^{\prime}\left(
y-\alpha X_{n}\right)  }{\left(  K^{\prime}\left\vert y-\alpha X_{n}%
\right\vert \right)  ^{\beta}}\!=\!\frac{P_{n}S_{n}\left(  \frac{y}{\alpha}%
-X_{n}\right)  }{\left(  K\left\vert \frac{y}{\alpha}-X_{n}\right\vert
\right)  ^{\beta}}\!=\!L_{X_{n}}{\textstyle\left(  \frac{y}{\alpha}\right)  }%
\]
and by~(\ref{e.Cell}) one gets the required dilation relation for the
individual cells. Moreover, using~(\ref{e.SINR}) one gets by simple algebra
$\mathrm{SINR}\left(  y,\Phi^{\prime}\right)  =\mathrm{SINR}\left(
y/\alpha,\Phi\right)  $. Starting from~(\ref{e.CriticalTraffic}) and making
the change of variable $z=y/\alpha$, it follows that
\begin{align*}
\rho_{\mathrm{c}}^{\prime}\left(  X^{\prime}\right)   &  =\left\vert
V^{\prime}\left(  X^{\prime}\right)  \right\vert \left[  \int_{V^{\prime
}\left(  X^{\prime}\right)  }1/R\left(  \mathrm{SINR}\left(  y,\Phi^{\prime
}\right)  \right)  \mathrm{d}y\right]  ^{-1}\\
&  =\alpha^{2}\left\vert V\left(  X\right)  \right\vert \left[  \int_{\alpha
V\left(  X\right)  }1/R\left(  \mathrm{SINR}\left(  y/\alpha,\Phi\right)
\right)  \mathrm{d}y\right]  ^{-1}\\
&  =\rho_{\mathrm{c}}\left(  X\right).
\end{align*}
The remaining desired equalities then follow from the fact that
\[
\rho^{\prime}\left(  X^{\prime}\right)
=\frac{\rho}{\alpha^{2}}\left\vert \alpha V\left(  X\right)  \right\vert
=\rho\left\vert V\left(  X\right)  \right\vert =\rho\left(  X\right).
\]
\end{proof}

\begin{corollary}
Consider the rescaled network as in Proposition~\ref{p.Scaling} with the
interference factors taken $\varphi_{X}=\min\left(  \theta\left(  X\right)
,1\right)  $. Then the load equations~(\ref{e.LoadEquations}) are the same for
the two networks $\Phi$\ and $\Phi^{\prime}$. Therefore, the load factors
solving these equation are the same $\theta^{\prime}\left(  X^{\prime}\right)
=\theta\left(  X\right)  $ and consequently, $\rho_{\mathrm{c}}^{\prime
}(X^{\prime})=\rho_{\mathrm{c}}(X)$, $r^{\prime}(X^{\prime})=r(X)$,
$N^{\prime}(X^{\prime})=N(X)$, $\theta^{\prime}(X^{\prime})=\theta(X)$.
\end{corollary}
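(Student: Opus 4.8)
The plan is to leverage Proposition~\ref{p.Scaling} as much as possible, so that the only new ingredient is the observation that, with the interference factors chosen self-consistently as $\varphi_X=\min(\theta(X),1)$, the fixed-point system~(\ref{e.LoadEquations}) is literally unchanged under the rescaling. First I would write out the load equation for the rescaled network: by definition of $\theta'$ and $\Phi'$, and using $K'=K/\alpha$, $\rho'=\rho/\alpha^2$, $S_n'(y)=S_n(y/\alpha)$, $P_n'=P_n$, together with the interference factors $\varphi'_{X'}=\min(\theta'(X'),1)$, the quantity $\theta'(\alpha X_n)$ equals
\[
\frac{\rho}{\alpha^2}\int_{V'(\alpha X_n)}1\Big/R\!\left(\frac{1/L'_{\alpha X_n}(y)}{N+\sum_{Y'\in\Phi'\setminus\{\alpha X_n\}}\min(\theta'(Y'),1)/L'_{Y'}(y)}\right)\mathrm{d}y.
\]

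Next I would substitute the two identities established in (the proof of) Proposition~\ref{p.Scaling}, namely $V'(\alpha X_n)=\alpha V(X_n)$ and $L'_{\alpha X_n}(y)=L_{X_n}(y/\alpha)$, change variables $z=y/\alpha$ so that $\mathrm{d}y=\alpha^2\,\mathrm{d}z$ and the domain becomes $V(X_n)$, and note that the $\alpha^2$ from the Jacobian cancels the $1/\alpha^2$ in $\rho'$. After relabelling $Y'=\alpha Y$ in the interference sum and using $L'_{\alpha Y}(y)=L_Y(y/\alpha)=L_Y(z)$, the right-hand side becomes
\[
\rho\int_{V(X_n)}1\Big/R\!\left(\frac{1/L_{X_n}(z)}{N+\sum_{Y\in\Phi\setminus\{X_n\}}\min(\theta'(\alpha Y),1)/L_{Y}(z)}\right)\mathrm{d}z.
\]
Thus the unknowns $\{\theta'(\alpha X)\}_{X\in\Phi}$ satisfy exactly the same system~(\ref{e.LoadEquations}) as the unknowns $\{\theta(X)\}_{X\in\Phi}$ — it is the identical functional fixed-point equation, with the $X$-th component indexed by $\alpha X$ rather than $X$.

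Finally I would invoke uniqueness of the solution to~(\ref{e.LoadEquations}) to conclude $\theta'(\alpha X)=\theta(X)$ for all $X\in\Phi$. (If the paper does not wish to assume uniqueness, one can instead argue that the two systems have the same set of solutions, so any consistent selection rule picks matching loads; I would phrase it so as to match whatever~\cite{BlaszczyszynJK2014TypicalCell,KarrayJovanovic2013Load} guarantee.) Once $\theta'(X')=\theta(X)$ is in hand, the self-consistent interference factors coincide, $\varphi'_{X'}=\min(\theta'(X'),1)=\min(\theta(X),1)=\varphi_X$, so the hypotheses of Proposition~\ref{p.Scaling} are met with precisely these marks, and the remaining equalities $\rho_{\mathrm c}'(X')=\rho_{\mathrm c}(X)$, $r'(X')=r(X)$, $N'(X')=N(X)$ follow immediately from that proposition. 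The only genuine subtlety — the "hard part" — is the bookkeeping in the interference sum: one must check that relabelling $Y\mapsto\alpha Y$ is a bijection of $\Phi\setminus\{X_n\}$ onto $\Phi'\setminus\{\alpha X_n\}$ and that it carries $\min(\theta'(\cdot),1)$ to $\min(\theta'(\alpha\cdot),1)$ without creating a circular dependence; everything else is the change of variables already performed in Proposition~\ref{p.Scaling}.
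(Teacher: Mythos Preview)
Your argument is correct and is exactly the intended one: the paper does not give an explicit proof of this corollary, treating it as immediate from Proposition~\ref{p.Scaling}, and what you have written is precisely the natural unpacking of that claim --- substitute the scaling identities $V'(\alpha X)=\alpha V(X)$ and $L'_{\alpha X}(y)=L_X(y/\alpha)$ into~(\ref{e.LoadEquations}), change variables, and observe that the two fixed-point systems coincide. Your caveat about uniqueness of the solution to~(\ref{e.LoadEquations}) is well placed; the paper simply asserts ``the load factors solving these equations are the same'' without addressing it, so your phrasing (same system, hence same solution set) is if anything more careful than the original.
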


From the above observations we can deduce now that the considered scaling of
the network parameters preserves the mean cell characteristics defined in
Section~\ref{s.MeanCell}. Denote by $\mathbf{E}^{\prime}{}^{0}$\ the
expectation with respect to the Palm probability associated to $\Phi^{\prime}$.

\begin{corollary}
\label{c.Scaling} Consider the rescaled network as in
Proposition~\ref{p.Scaling} with arbitrary interference factors, possibly
satisfying the load equations~(\ref{e.LoadEquations}). Then $\mathbf{E}%
^{\prime0}\left[  \rho^{\prime}\left(  0\right)  \right]  =\mathbf{E}%
^{0}\left[  \rho\left(  0\right)  \right]  $ and $\mathbf{E}^{\prime0}\left[
\theta^{\prime}\left(  0\right)  \right]  =\mathbf{E}^{0}\left[  \theta\left(
0\right)  \right]  $. Consequently, the mean cells characteristics associated
to $\Phi$ and $\Phi^{\prime}$\ as described in Section~\ref{s.MeanCell} are identical.
\end{corollary}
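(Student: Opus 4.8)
The plan is to reduce Corollary~\ref{c.Scaling} to the already established cell-by-cell identities by invoking the ergodic/Palm machinery recalled in Section~\ref{s.MeanCell}. First I would note that $\bar\rho := \mathbf{E}^0[\rho(0)]$ and $\bar\theta := \mathbf{E}^0[\theta(0)]$ are, by the ergodic theorem for point processes (\cite[Theorem~13.4.III]{DaleyVereJones2003}) cited above, equal to the spatial averages $\lim_{|A|\to\infty} \Phi(A)^{-1}\sum_{X\in\Phi\cap A}\rho(X)$ and $\lim_{|A|\to\infty}\Phi(A)^{-1}\sum_{X\in\Phi\cap A}\theta(X)$; and similarly for $\Phi'$. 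So it suffices to match these spatial averages between the two networks.

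Next I would use the explicit geometric correspondence from Proposition~\ref{p.Scaling}: the map $X\mapsto X' = \alpha X$ is a bijection between $\Phi$ and $\Phi'$, and under it $\rho'(X') = \rho(X)$ and $\theta'(X') = \theta(X)$ (the latter holding for arbitrary interference factors carried over by $\varphi'_{X'} = \varphi_X$, and in particular for factors solving the load equations~(\ref{e.LoadEquations}), by the Corollary preceding Corollary~\ref{c.Scaling}). Now take $A$ a ball of radius $\alpha r$ centered at the origin; then $\Phi'\cap A$ corresponds exactly to $\Phi\cap (A/\alpha)$ under $X\mapsto\alpha X$, so $\Phi'(A) = \Phi(A/\alpha)$ and $\sum_{X'\in\Phi'\cap A}\rho'(X') = \sum_{X\in\Phi\cap(A/\alpha)}\rho(X)$, with the same identity for $\theta$. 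Dividing through and letting $r\to\infty$ (so that both $A$ and $A/\alpha$ exhaust $\mathbb{R}^2$), the two spatial averages coincide, hence $\mathbf{E}^{\prime0}[\rho'(0)] = \mathbf{E}^0[\rho(0)]$ and $\mathbf{E}^{\prime0}[\theta'(0)] = \mathbf{E}^0[\theta(0)]$.

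Finally, since the mean cell of Section~\ref{s.MeanCell} is defined purely as a deterministic function of the two numbers $\bar\rho$ and $\bar\theta$ --- via $\bar\rho_{\mathrm{c}} = \bar\rho/\bar\theta$, $\bar r = \max(\bar\rho_{\mathrm{c}} - \bar\rho, 0)$, $\bar N = \bar\rho/\bar r$ --- the equality of $(\bar\rho,\bar\theta)$ for the two networks immediately yields equality of all the mean-cell characteristics.

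The only genuine subtlety, and the step I would be most careful about, is the interchange of the two limiting procedures: the scaling $\alpha$ is fixed while $|A|\to\infty$, so one must check that taking the averaging region to be $\alpha$-dilated balls (rather than balls) is legitimate for the ergodic limit. This is standard --- the ergodic theorem for stationary point processes holds along any convex averaging sequence, and $\{\alpha B_r\}_r$ is such a sequence --- but it is worth stating explicitly rather than silently identifying Palm expectations with one fixed choice of averaging sets. An alternative, perhaps cleaner, route that avoids this entirely is to apply the mapping theorem for Palm calculus directly: the deterministic dilation $x\mapsto\alpha x$ pushes the Palm distribution of $\Phi$ forward to that of $\Phi'$, and since $\rho'(0)$ and $\theta'(0)$ evaluated under $\mathbf{P}^{\prime0}$ pull back to $\rho(0)$ and $\theta(0)$ under $\mathbf{P}^0$ (by Proposition~\ref{p.Scaling} applied at the origin), the expectations agree; then conclude the mean-cell identities as above.
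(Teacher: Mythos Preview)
Your argument is correct and, for the equality $\mathbf{E}^{\prime0}[\theta'(0)]=\mathbf{E}^0[\theta(0)]$, is exactly the paper's: express the Palm expectation as an ergodic spatial average, use the bijection $X\mapsto\alpha X$ together with $\theta'(\alpha X)=\theta(X)$ from Proposition~\ref{p.Scaling} to rewrite the sum over $\Phi'\cap A$ as a sum over $\Phi\cap(A/\alpha)$, and then let $|A|\to\infty$. Your remark that $\{A/\alpha\}$ is again a convex averaging sequence is precisely the justification the paper leaves implicit in its third equality.

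The one genuine difference is in the treatment of $\rho$. You run the same ergodic argument for $\rho$ as for $\theta$, which works perfectly well. The paper instead takes a shortcut via the inverse formula of Palm calculus \cite[Theorem~4.2.1]{BaccelliBlaszczyszyn2009T1}: since $\rho(0)=\rho\,|V(0)|$ and $\mathbf{E}^0[|V(0)|]=1/\lambda$, one has $\mathbf{E}^0[\rho(0)]=\rho/\lambda$ directly, and likewise $\mathbf{E}^{\prime0}[\rho'(0)]=\rho'/\lambda'=(\rho/\alpha^2)/(\lambda/\alpha^2)=\rho/\lambda$. This is slightly slicker for $\rho$ but of course unavailable for $\theta$, which is why the paper falls back on ergodicity there; your uniform treatment trades that economy for conceptual simplicity. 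Your alternative via the Palm mapping theorem is also sound and arguably the cleanest of all three routes, though not the one the paper chose.
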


\begin{proof}
It follows from the inverse formula of Palm calculus~\cite[Theorem~4.2.1]%
{BaccelliBlaszczyszyn2009T1} that $\mathbf{E}^{0}\left[  \rho\left(  0\right)
\right]  =\frac{\rho}{\lambda}$. Similarly, $\mathbf{E}^{\prime0}\left[
\rho^{\prime}\left(  0\right)  \right]  =\frac{\rho^{\prime}}{\lambda^{\prime
}}=\frac{\rho}{\lambda}$ which proves the first desired equality. The second
equality follows e.g. by the ergodic argument
\begin{align*}
\mathbf{E}^{\prime0}\left[  \theta^{\prime}\left(  0\right)  \right]   &
=\lim_{|A|\rightarrow\infty}\frac{1}{\Phi^{\prime}(A)}\sum_{X^{\prime}\in
\Phi^{\prime}\cap A}\theta^{\prime}(X^{\prime})\\
&  =\lim_{|A|\rightarrow\infty}\frac{1}{\Phi(A/\alpha)}\sum_{X\in\Phi\cap
A/\alpha}\theta(X)\\
&  =\lim_{|A|\rightarrow\infty}\frac{1}{\Phi(A)}\sum_{X\in\Phi\cap A}%
\theta(X)=\mathbf{E}^{0}\left[  \theta\left(  0\right)  \right].
\end{align*}
where the second equality is due to Proposition~\ref{p.Scaling}.
\end{proof}

\subsection{Inhomogeneous networks with homogeneous QoS response to the
traffic demand}

\label{ss.KversusDelta} Consider a geographic region (say a country), which is
composed of urban, suburban and rural areas. The parameters $K$ and $\beta
$\ of the distance loss model~(\ref{e.DistanceLoss}) depend on the type of the
zone. For example the COST-Hata model~\cite{Cost231_1999} gives the distance
loss function in the form $10\log_{10}\left(  l\left(  x\right)  \right)
=A+B\log_{10}\left(  \left\vert x\right\vert \right)  $ where $A$ and $B$\ are
given in Table~\ref{ta.LossParameters} and the distance $\left\vert
x\right\vert $ is in km. The corresponding value of the parameter $K=10^{A/B}%
$\ as well as the ratio $K_{\text{urban}}/K$ are also given in this table.

The density of base stations also depends on the type of the zone: it is
usually much higher in urban than in rural areas. Indeed, the distance $D$
between two neighboring base stations in rural areas may be up to $10$ times
larger than in dense urbane zones (where a typical value of $D$ is about $1$km).

Consider now a situation where the average distance between neighbouring base
stations is kept inversely proportional to the distance coefficient of the
path-loss function: $D\times K=\mathrm{const}$, or, in other words,
\begin{equation}
K_{i}/\sqrt{\lambda}_{i}=\mathrm{const}, \label{e.fractal}%
\end{equation}
where $\lambda_{i}$ is the network density in the given zone (assumed
homogeneous) and $K_{i}$ is the path-loss distance factor observed in this
zone. In our example shown in Table~\ref{ta.LossParameters} we should thus
have $D=1$,$~5$ and $8$\ km respectively for urban, suburban and rural
zones.~\footnote{Indeed, when operators deploys networks, they firstly aim to
assure some coverage condition which has the form $\left(  D\times K\right)
^{\beta}=\mathrm{const}$. Moreover, in urban zones, networks have to be
densified not only for capacity constraints, but also to assure coverage for
indoor users.}

Then the scaling laws proved in Section~\ref{ss.Scaling} say that locally, for
each homogeneous area of this inhomogeneous network, one will observe the same
relation between the mean user throughput and the (per-cell) traffic demand.
Cells in rural areas will be larger than these in urban areas, however in the
case when they are charged in the same way by the traffic demand (which, per
unit surface, needs to be correspondingly larger in urban areas) then they
offer the same QoS. In other words, one relation is enough to capture the key
dependence between the QoS, network density, bandwidth and the traffic demand
for different areas of this network. It can be used for the network
dimensioning.~\footnote{The approximation of an inhomogeneous network by a
piecewise homogeneous network can be made more precise e.g. following the
ideas presented
in~\cite{BlaszczyszynS2003Approximate,BlaszczyszynS2005Approximations}. Note
also that in our approach we ignored the fact that the path-loss exponent
$\beta$ is not the same for urban and the two other zones considered in
Table~\ref{ta.LossParameters}. This is clearly an approximation. Its quality can be studied 
using a network equivalence approach~\cite{BlaszczyszynK2013Equivalence} allowing for different path-loss
exponents.}%

\begin{table}[tbp] \centering
\begin{tabular}
[c]{|l|l|l|l|l|}\hline
\textbf{Environment} & $A$ & $B$ & $K=10^{A/B}$ & $K_{\text{urban}}/K$\\\hline
\textbf{Urban} & $133.1$ & $33.8$ & $8667$ & $1$\\\hline
\textbf{Suburban} & $102.0$ & $31.8$ & $1612$ & $5$\\\hline
\textbf{Rural} & $97.0$ & $31.8$ & $1123$ & $8$\\\hline
\end{tabular}
\caption{Propagation parameters for carrier frequency 1795MHz from~\cite[Table 6.4]{Lagrange1995}.}\label{ta.LossParameters}%
\vspace{-8ex}
\end{table}%

\section{Numerical results}

\begin{figure*}[th!]
\begin{center}
\hspace{-1em}\hbox{
\subfigure[Mean cell load]
{\includegraphics[
width=0.34\linewidth
]%
{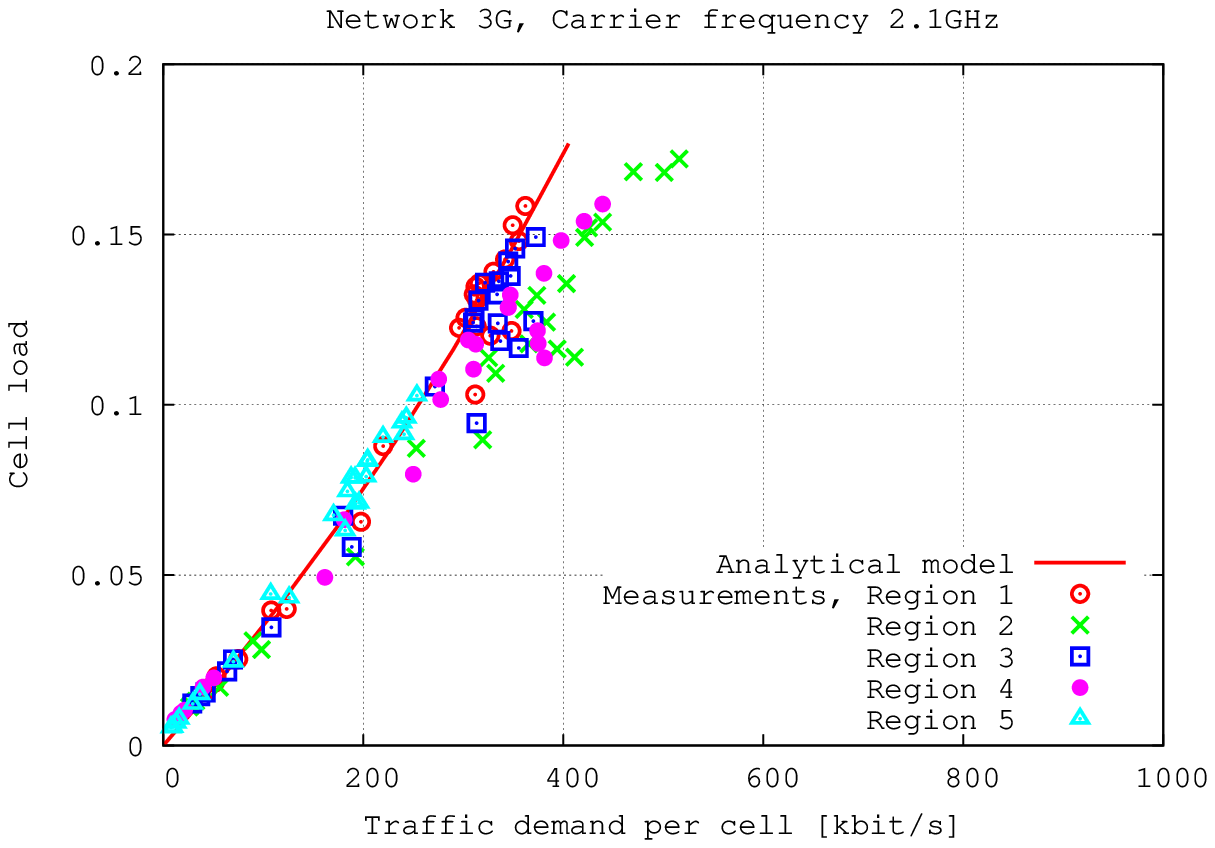}%
\label{f.LoadVsTraffic_10712}%
}
\hspace{-1.5em}
\subfigure[Mean number of users]
{\includegraphics[
width=0.34\linewidth
]%
{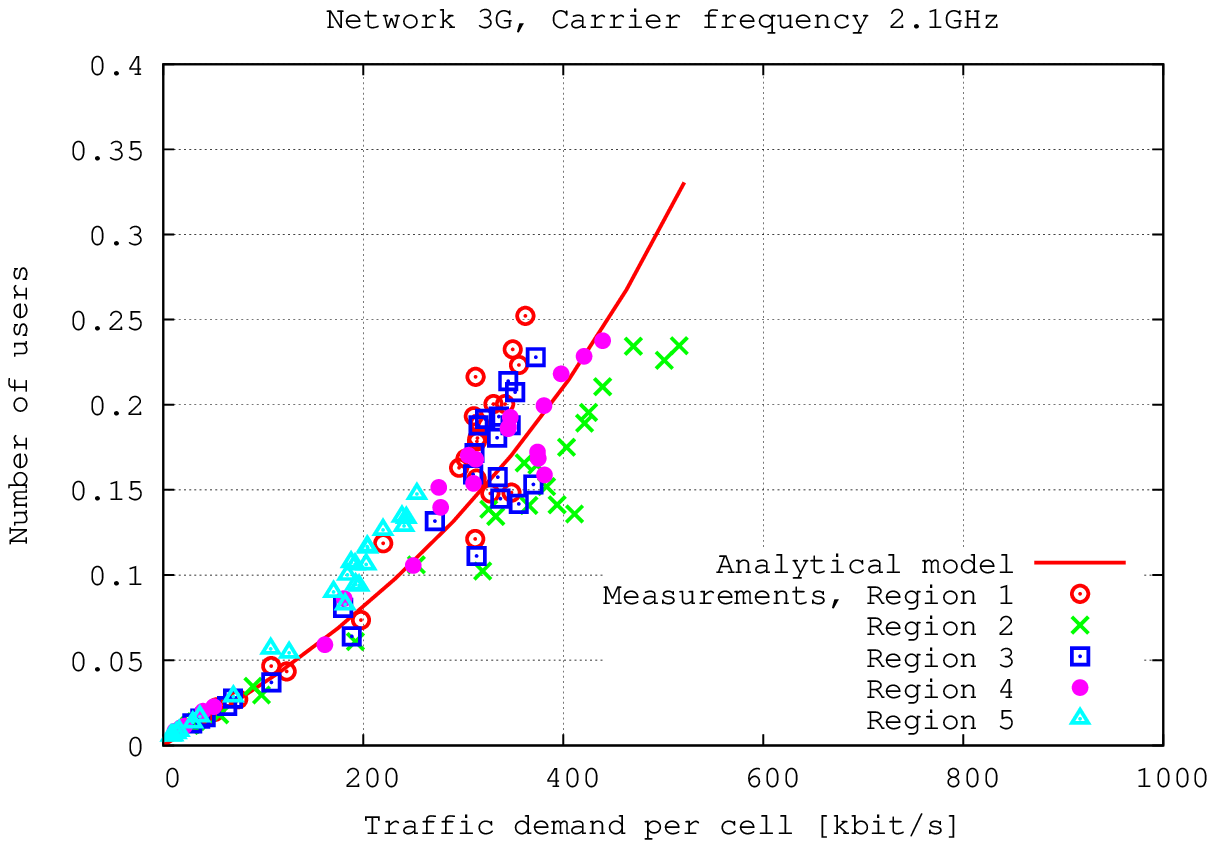}%
\label{f.NbUsersVsTraffic_10712}%
}
\hspace{-1.5em}
\subfigure[Mean user throughput]
{
\includegraphics[%
width=0.34\linewidth
]%
{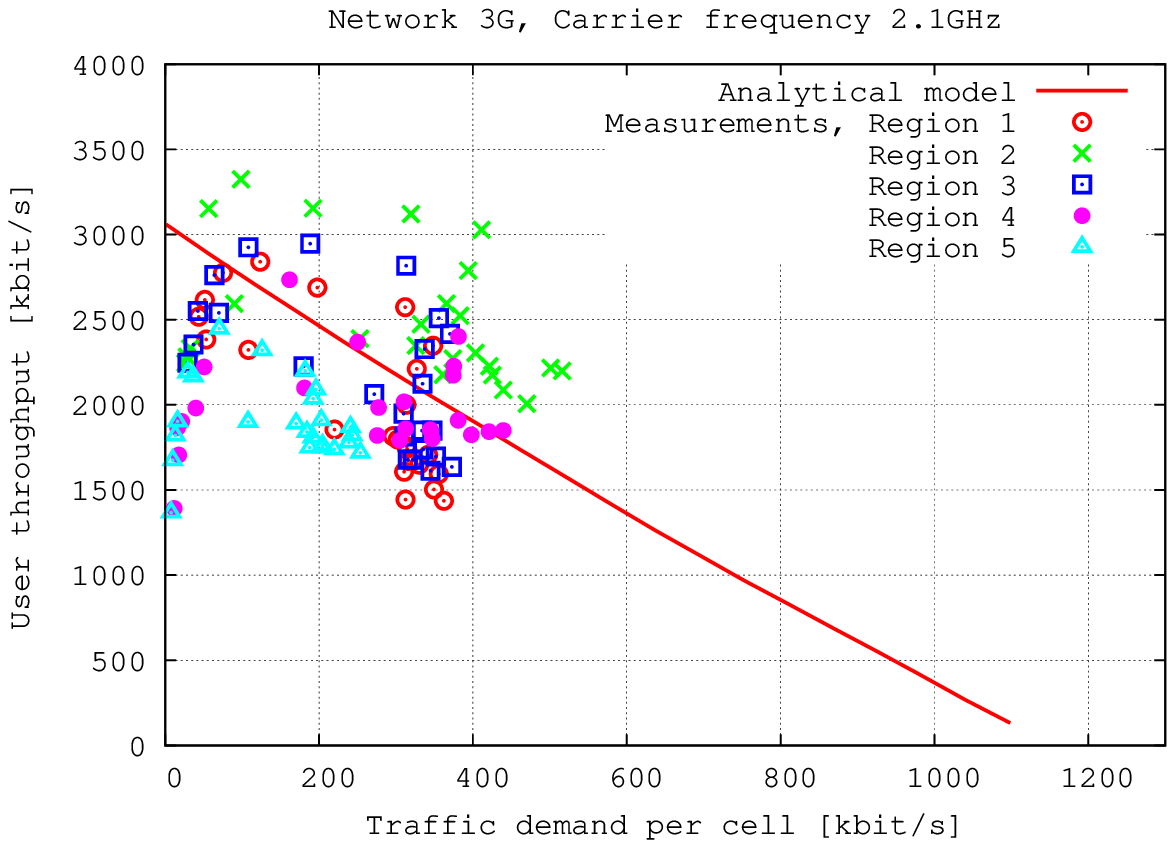}%
\label{f.ThroughputVsTraffic_10712}%
}
}
\end{center}
\vspace{-3ex}
\caption{Mean cell characteristics  versus traffic demand calculated analytically and
estimated from measurements for a 3G network in Country~1.
\label{Figure3G}
}%
\vspace{-1ex}
\begin{center}
\subfigure[Regular decomposition into
squares of different size]
{\includegraphics[
width=0.4\linewidth
]%
{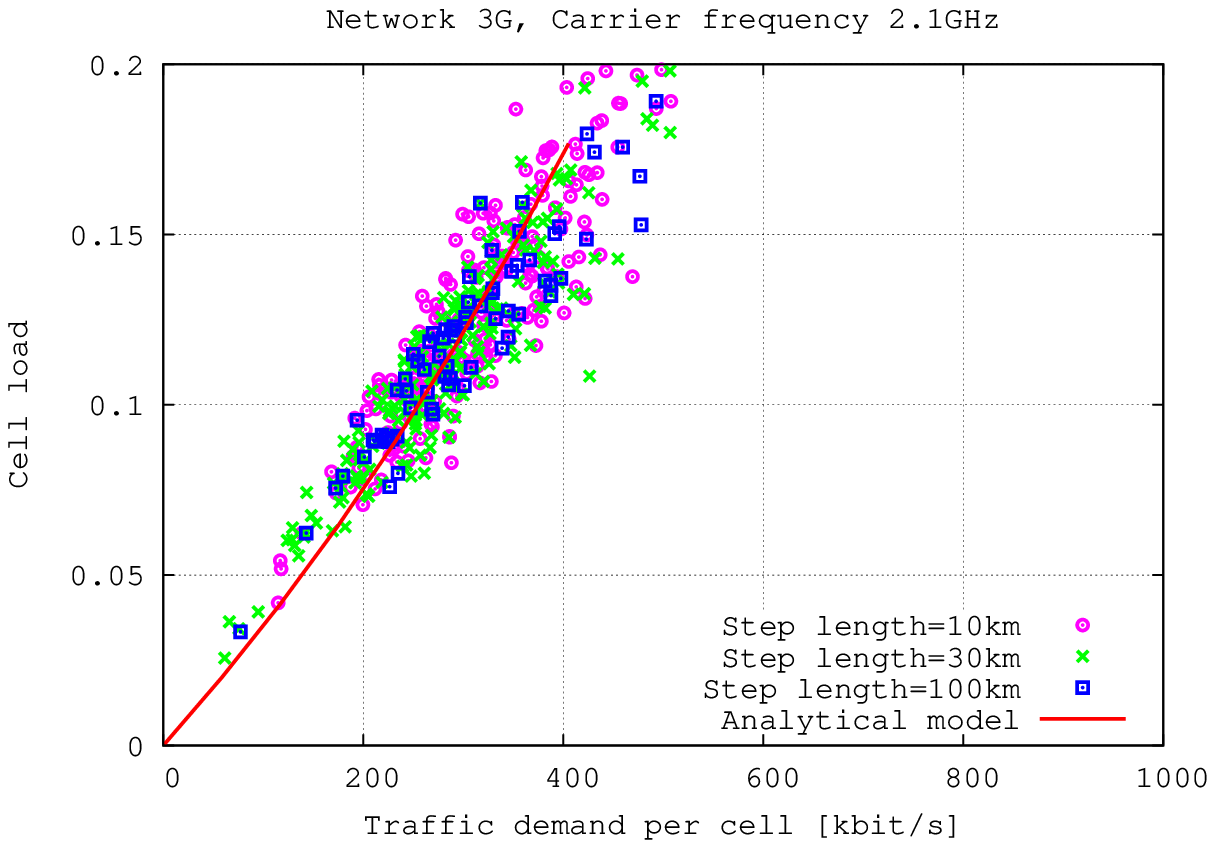}%
\label{f.LoadVsTraffic}%
}
\hspace{0.1\linewidth}
\subfigure[Decomposition into different density zones]
{\includegraphics[
width=0.4\linewidth
]%
{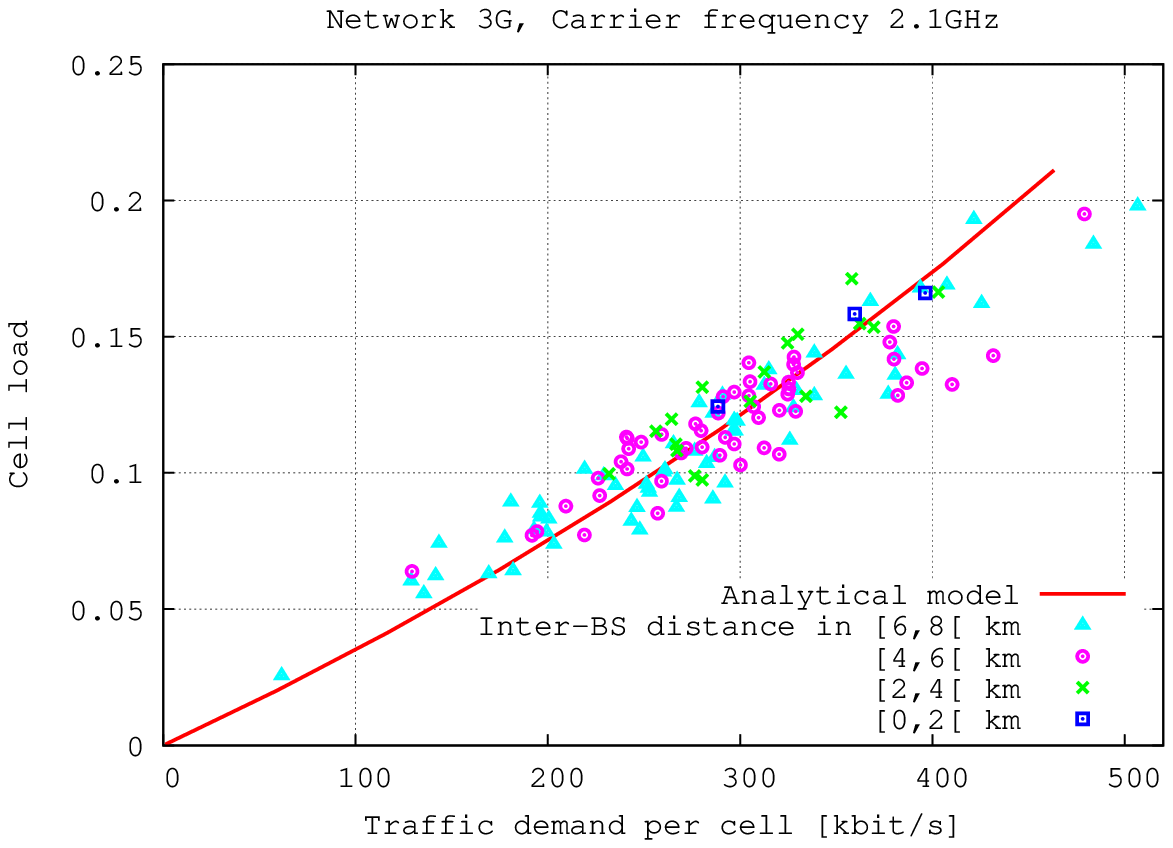}%
\label{f.LoadVsTraffic_Radius}%
}
\end{center}
\vspace{-3ex}
\caption{Mean cell load versus traffic demand in a 3G network at 2.1GHz in Country~1, with two different decompositions.
\label{Figure4G}
}%
\vspace{-3ex}
\end{figure*}

\label{s.NumericalResults}The relations of the network performance metrics to
the traffic demand described in\ Section~\ref{s.MeanCell} were already
validated in~\cite{BlaszczyszynJK2014TypicalCell} comparing them to
operational network measurements in some typical cities in Europe. Thus we
shall focus on the validation of its extension to more large areas, and
ultimately to a whole country as proposed in\ Section~\ref{s.ModelAnalysis}.
Once this validation carried, we shall solve numerically the bandwidth
dimensioning problem.

\subsection{Real-field measurements}

We describe now the real-field measurements. The raw data are collected using
a specialized tool which is used by operational engineers for network
maintenance. This tool measures several parameters for every base station 24
hours a day. In particular, one can get the traffic demand, cell load and
number of users for each cell in each hour. Then we estimate the network
performance metrics for each hour averaging over all considered cells. The
mean user throughput is calculated as the ratio of the mean traffic demand to
the mean number of users. The mean traffic demand is used as the input of our
analytical model.

\subsection{A reference country}

We consider a reference country, called in what follows Country~1, divided into 5 regions. Each of these regions comprises
a large mix of urban, suburban and rural zones. In what follows we consider 3G and 4G networks deployed in Country~1.
\subsubsection{3G network}

\paragraph{Numerical setting}

\label{s.NumericalSetting}Country~1 is covered by a 3G network at carrier
frequency $f_{0}=2.1$GHz with frequency bandwidth $W=5$MHz, base station power
$P=60$dBm and noise power $N=-96$dBm.\ The peak bit-rate equals to $30$\% of
the ergodic capacity of the flat fading channel; i.e., $R\left(
\mathrm{SINR}\right)  =0.3\times W\mathbf{E}\left[  \log_{2}\left(
1+\left\vert H\right\vert ^{2}\mathrm{SINR}\right)  \right]  $\ where
$\left\vert H\right\vert ^{2}$\ is a unit mean exponential random variable
representing the fading and $\mathbf{E}\left[  \cdot\right]  $\ is the
expectation with respect to~$H$.

\paragraph{Typical urban zone}

\label{s.UrbanZone}We consider a typical \emph{urban} zone (a city) in the
considered Country~1. Knowing all BS coordinates and the surface of the
deployment zone we deduce the BS density $\lambda=1.15$ stations per km$^{2}$
which corresponds to an average distance between two neighboring BS of
$D_{\mathrm{u}}\simeq1/\sqrt{\lambda}\simeq1$km.

For the analytical model, the locations of BS is modelled by a homogeneous
Poisson point process with intensity $\lambda$. The parameters of the distance
loss model~(\ref{e.DistanceLoss}) are\ estimated from the COST
Walfisch-Ikegami model~\cite{Cost231_1999} which gives
\[
\beta=3.8,\quad K_{0}=7117\text{km}^{-1}.%
\]
The shadowing random variable is lognormal with unit mean and
logarithmic-standard deviation $\sigma=9.6$dB and mean spatial correlation
distance $50$m.

Each BS comprises three antennas having each a three-dimensional radiation
pattern specified in~\cite[Table A.2.1.1-2]{3GPP36814-900}. The BS and mobile
antenna heights equal $30$m and $1.5$m respectively. The pilot channel power
is taken equal to $10\%$ of the total base station power.

The theoretical relations of the network performance metrics to the traffic demand are
calculated for such typical urban zone and  compared, to measurements in large
areas, and ultimately the whole country, applying the approach developed  Section~\ref{s.ModelAnalysis}.

\begin{figure*}[th!]
\begin{center}
\subfigure[2.6GHz]
{\includegraphics[
width=0.4\linewidth
]%
{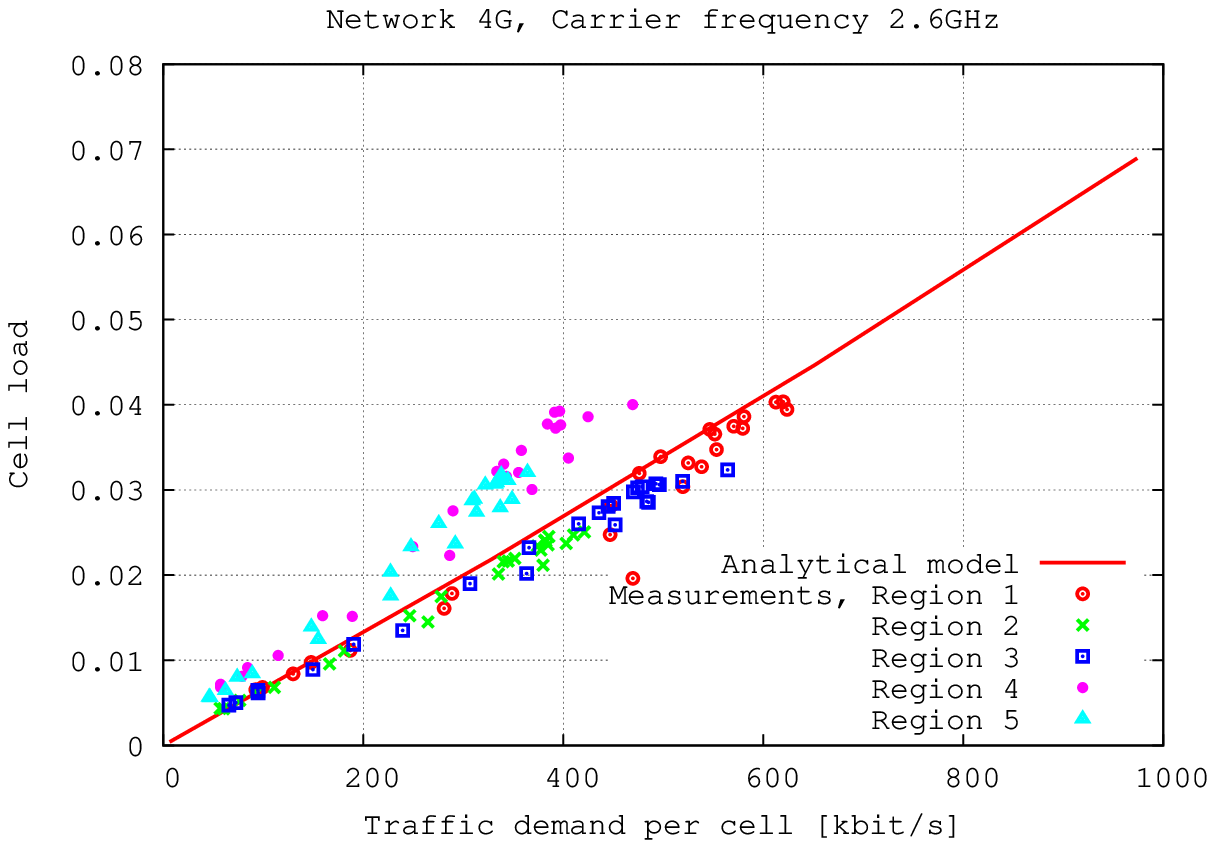}%
\label{f.LoadVsTraffic_2600}%
}
\hspace{0.1\linewidth}
\subfigure[800MHz]
{\includegraphics[
width=0.4\linewidth
]%
{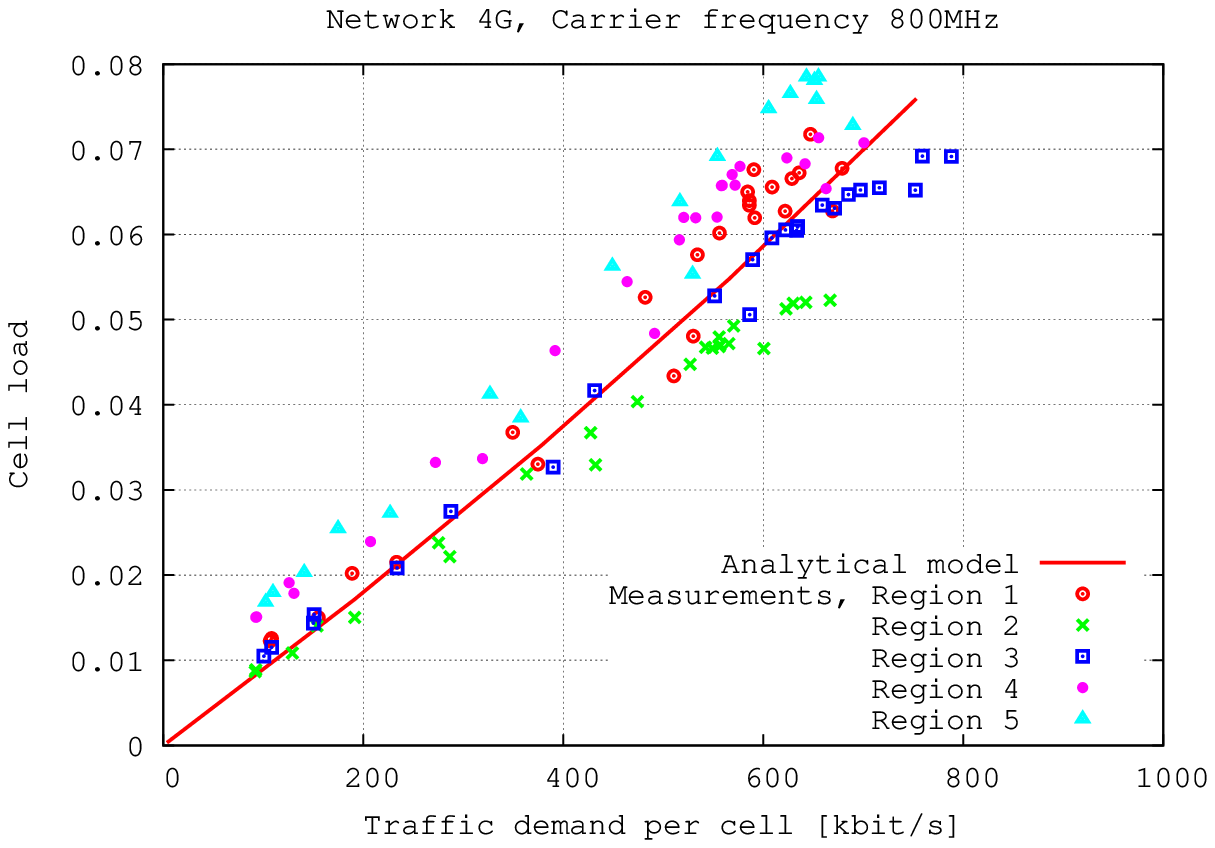}%
\label{f.LoadVsTraffic_800}%
}
\end{center}
\vspace{-3ex}
\caption{Mean cell load versus traffic demand for 4G in Country~1 at two different frequency bandwidths.
\label{Figure24G}
}%
\vspace{-1ex}
\begin{center}
\subfigure[2.6GHz]
{\includegraphics[
width=0.4\linewidth
]%
{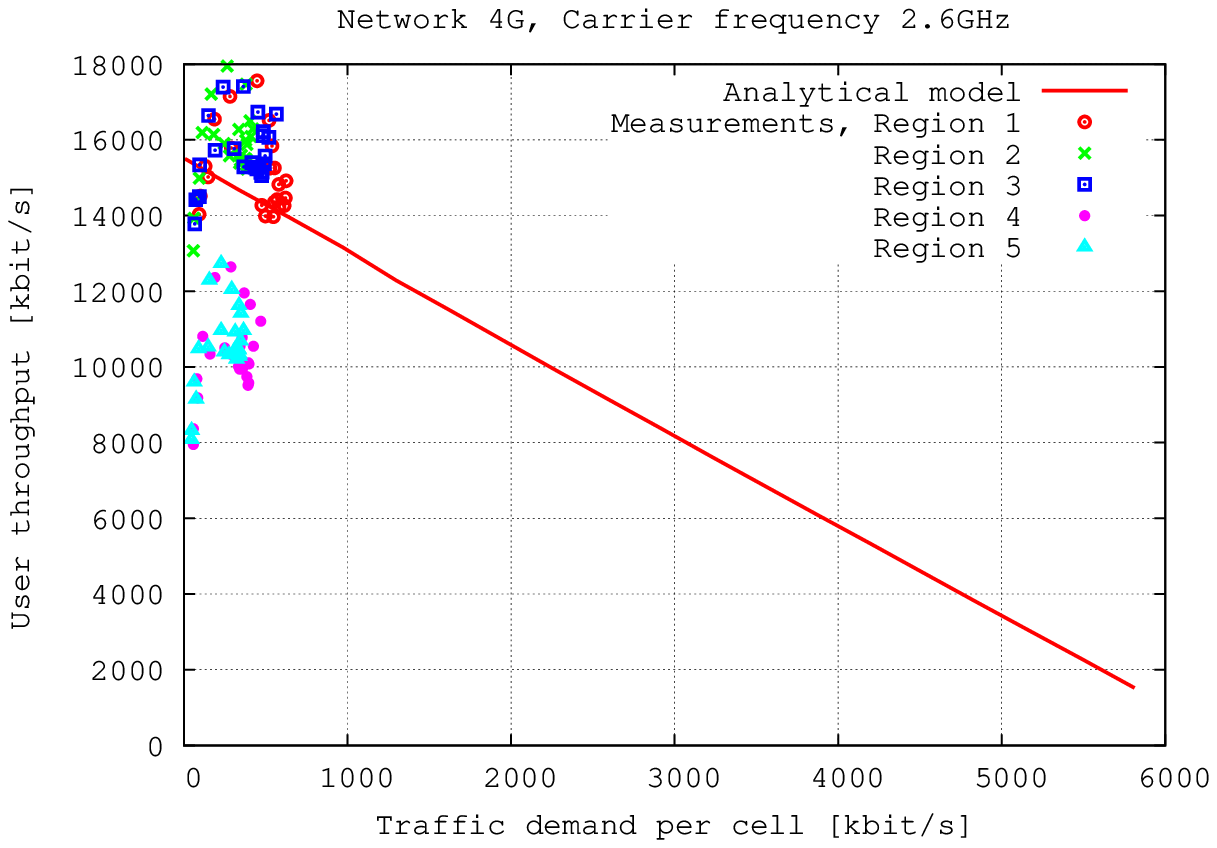}%
\label{f.ThroughputVsTraffic_2600}%
}
\hspace{0.1\linewidth}
\subfigure[800MHz]
{\includegraphics[
width=0.4\linewidth
]%
{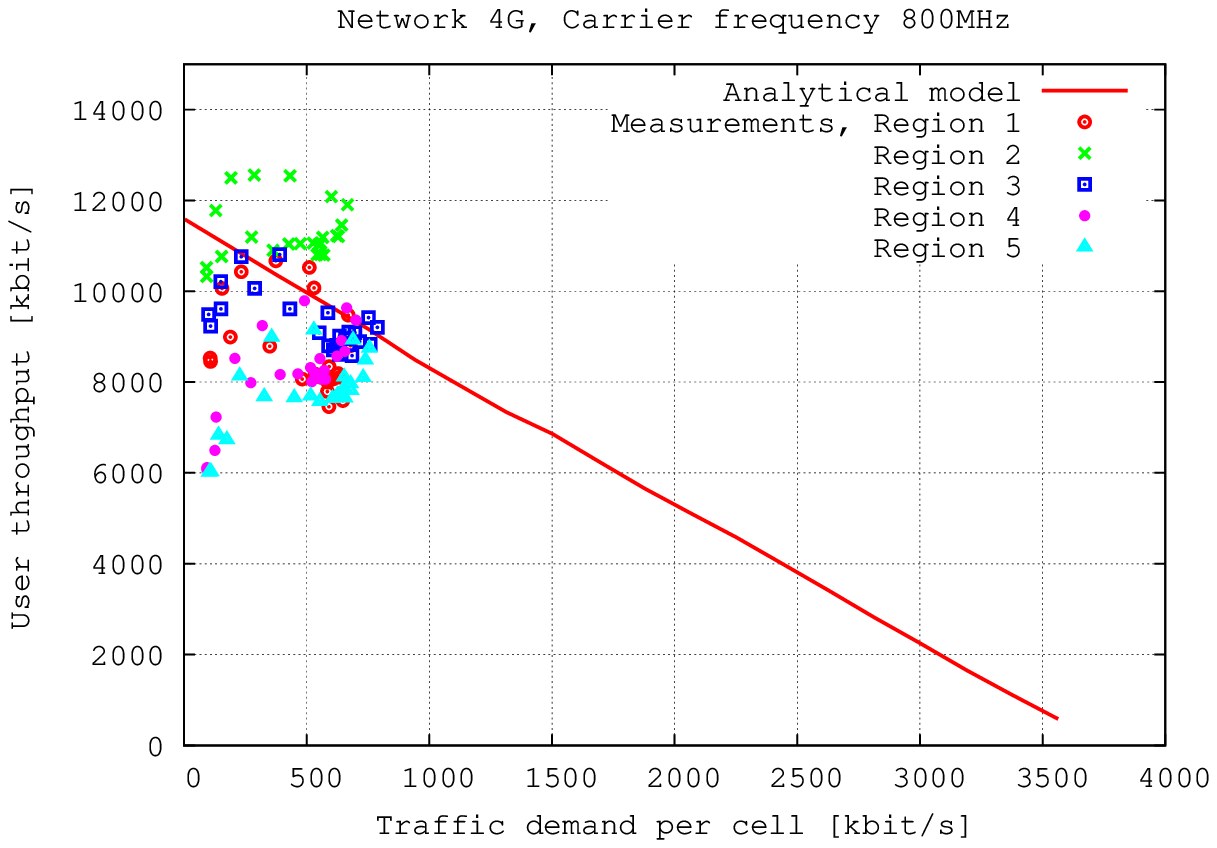}%
\label{f.ThroughputVsTraffic_800}%
}
\end{center}
\vspace{-3ex}
\caption{Mean user throughput versus traffic demand for  4G in Country~1 at two different frequency bandwidths.
\label{Figure24G-throughput}
}%
\vspace{-3ex}
\end{figure*}

\paragraph{Network performance}

Figure~\ref{f.LoadVsTraffic_10712} shows the mean cell load $\bar{\theta}$
versus mean traffic demand per cell $\bar{\rho}=\rho/\lambda$ calculated
analytically and obtained from measurements for the 5 considered regions. The
mean number of users $\bar{N}$\ and user's throughput $\bar{r}$ versus mean
traffic demand per cell $\bar{\rho}$ are plotted in
Figures~\ref{f.NbUsersVsTraffic_10712} and~\ref{f.ThroughputVsTraffic_10712}
respectively. Observe that the analytical curves fits well to the measurements.%

In order to analyze the effect of the size of the regions over which we
average the cell performance metrics, we decompose Country~1 into a regular
grid composed of squares (meshes) of equal side. Figure~\ref{f.LoadVsTraffic}
shows the mean cell load versus traffic demand for mesh sizes $3,10,30$ and
$100$km respectively. Observe that the larger is the mesh size, the closer are
measurements to the analytical (mean) model.%

We aim now to check the hypothesis made in Section~\ref{ss.KversusDelta}, namely that
zones with different density of base stations observe the same dependence of the mean 
cell characteristics on the mean per-cell traffic demand.
Figure~\ref{f.LoadVsTraffic_Radius} shows the mean cell load versus traffic
demand for mesh size $30$km where we distinguish the meshes according to the
distance between neighboring BS; distances within $[0,2]$~km\ (resp.
$[6,8]$~km) corresonding to urban (resp. rural) zones. Observe that there is
no apparent difference between urban, sub-urban and rural meshes.%

\subsubsection{4G network}

The peak bit-rate is calculated by $R\left(  \mathrm{SINR}\right)  =b\times
W\log_{2}\left(  1+\mathrm{SINR}/a\right)  $ where $a=3$, $b=1.12$ are
estimated from fitting to the results of a link simulation tool.\ 

\paragraph{Numerical setting for carrier frequency $2.6$GHz}

We consider a 4G network at carrier frequency $f=2.6$GHz with frequency
bandwidth $W=20$MHz, base station power $P=63$dBm and noise power $N=-90$dBm.\ 

The typical urban zone is the same as described in Section~\ref{s.UrbanZone}
except for the distance loss parameter $K$. Following the COST-Hata
model~\cite{Cost231_1999} this latter is calculated by
assuming that it is proportional to $f^{2/\beta}$; which gives $K=K_{0}%
\times\left(  f/f_{0}\right)  ^{2/\beta}=7964$km$^{-1}$.

\paragraph{Numerical setting for carrier frequency $800$MHz}

We consider also a 4G network at carrier frequency $f=800$MHz with frequency
bandwidth $W=10$MHz, base station power $P=60$dBm and noise power $N=-93$dBm.\ 

The typical urban zone is the same as described in Section~\ref{s.UrbanZone}
except for the average distance between two neighboring BS which is now
$D_{\mathrm{u}}\simeq1.5$km and the distance loss parameter $K=K_{0}%
\times\left(  f/f_{0}\right)  ^{2/\beta}=4283$km$^{-1}$.

\paragraph{Network performance}

Figures~\ref{f.LoadVsTraffic_2600} and~\ref{f.LoadVsTraffic_800} show the mean
cell load versus mean traffic demand per cell calculated analytically and
obtained from measurements for the 4G network at carrier frequencies $2.6$GHz
and $800$MHz\ respectively. Observe again that the analytical curve fits well
with the measurements.%

Figures~\ref{f.ThroughputVsTraffic_2600} and~\ref{f.ThroughputVsTraffic_800}
show the mean user's throughput versus mean traffic demand per cell for the 4G
network at carrier frequencies $2.6$GHz and $800$MHz\ respectively.
Even if currently we do not observe traffic demand larger than
1000kbit/s per cell, we show the predicted curves in their whole domains 
as they will serve for the network dimensioning, cf. Section~\ref{ss.dimension}.

\begin{figure*}[th!]
\begin{center}
\hspace{-1em}\hbox{
\subfigure[Mean cell load]
{\includegraphics[
width=0.34\linewidth
]%
{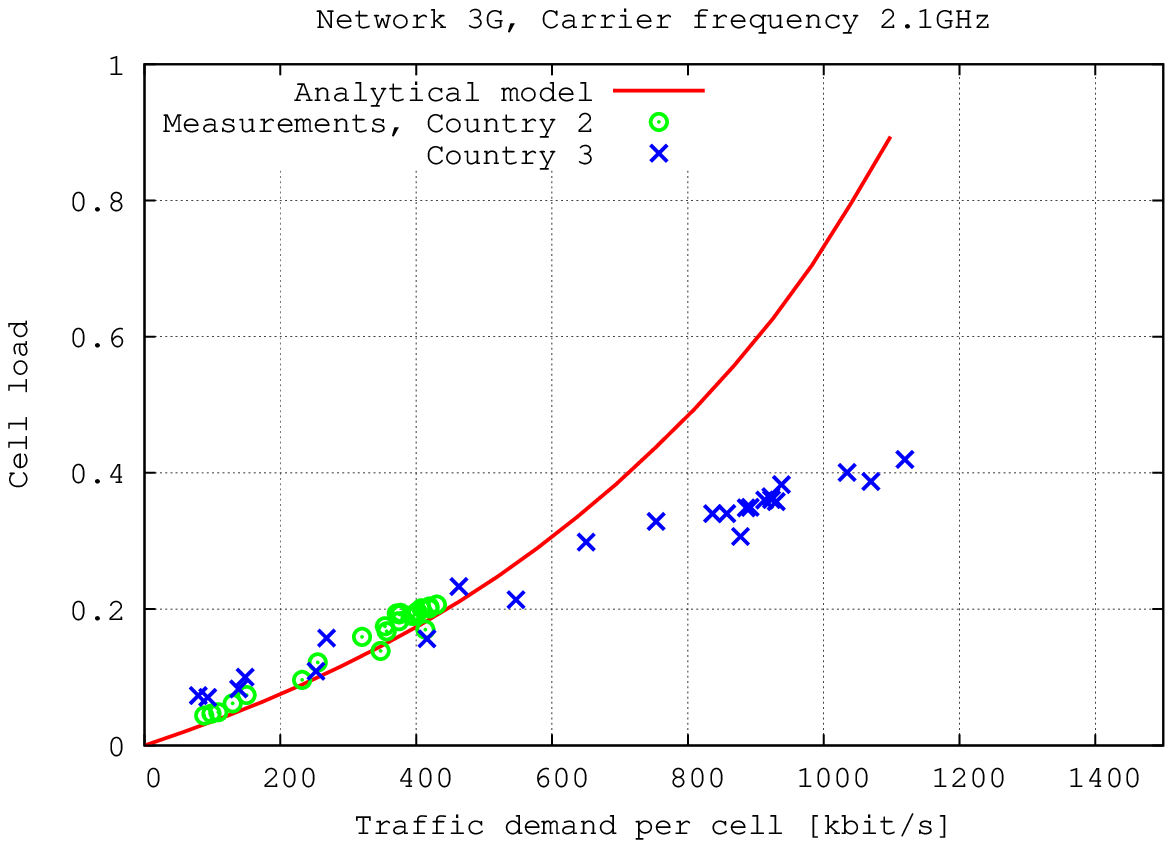}%
\label{f.LoadVsTraffic_Countries}%
}
\hspace{-1.5em}
\subfigure[Mean number of users]
{\includegraphics[
width=0.34\linewidth
]%
{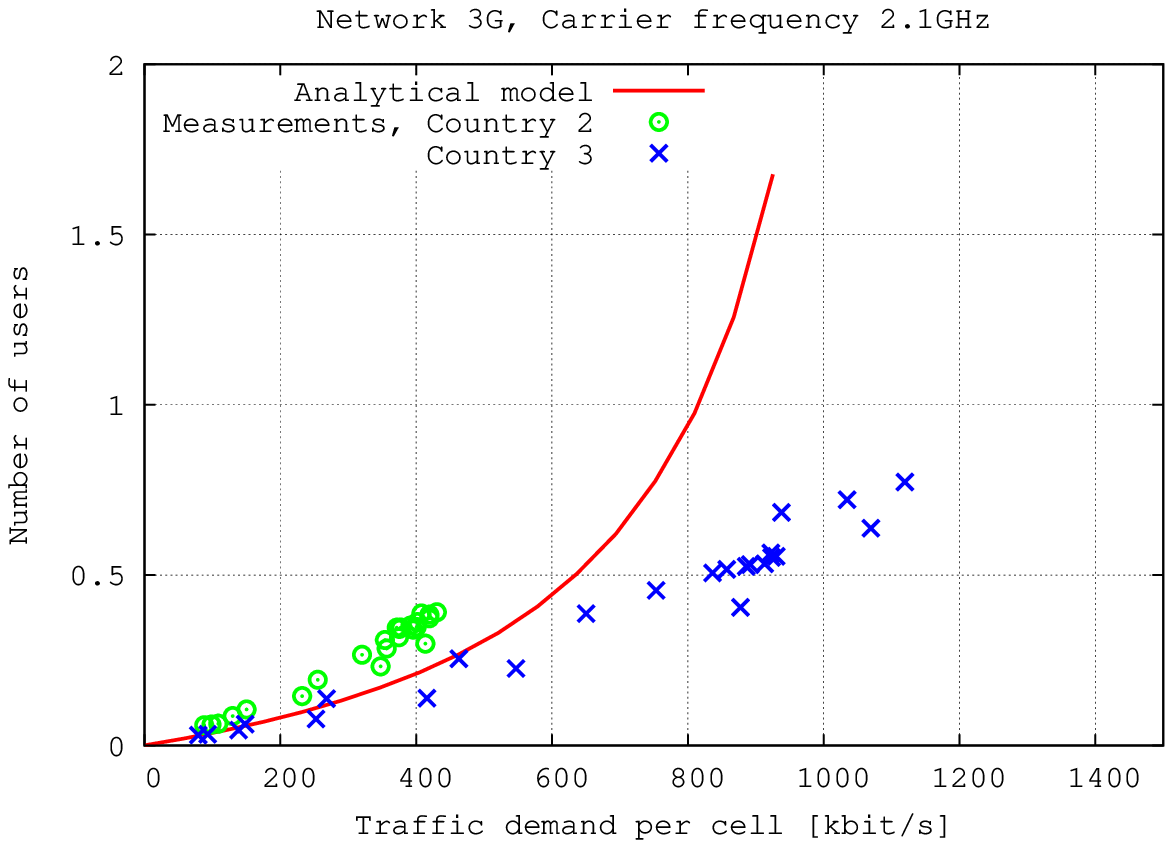}%
\label{f.NbUsersVsTraffic_Countries}%
}
\hspace{-1.5em}
\subfigure[Mean user throughput]
{
\includegraphics[%
width=0.34\linewidth
]%
{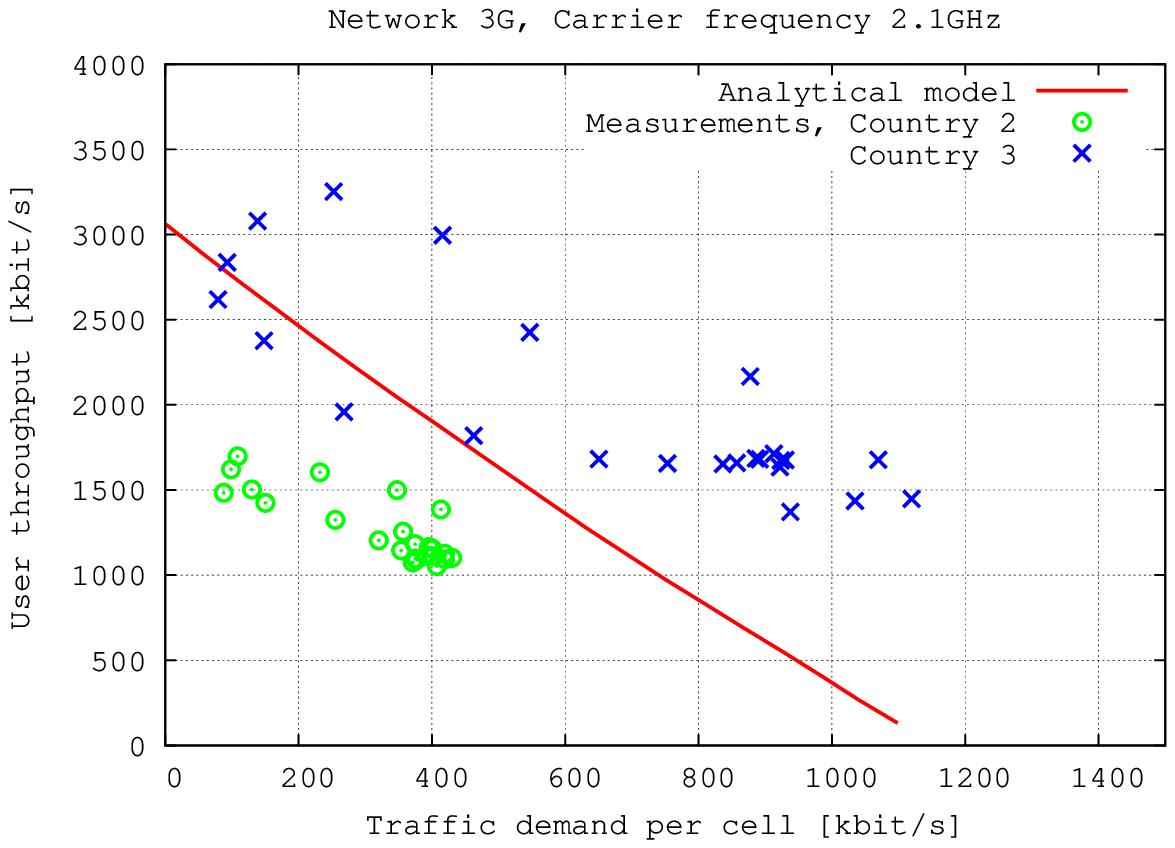}%
\label{f.ThroughputVsTraffic_Countries}%
}
}
\end{center}
\vspace{-3ex}
\caption{Mean cell characteristics  versus traffic demand calculated analytically and
estimated from measurements for a 3G network in Country~2 and~3.
\label{Figure3G-Coutry2and3}
}%
\vspace{-3ex}
\end{figure*}

\subsection{Two other countries}

We consider now two 3G networks in two other countries, called Country~2 and Country~3. 
The numerical setting is the
same as in Section~\ref{s.NumericalSetting}. Since we don't have detailed
information on the propagation characteristics in these countries, we make the
hypothesis that the typical urban zone has the same characteristics as for Country~1; i.e. those described in Section~\ref{s.UrbanZone}.

Figures~\ref{f.LoadVsTraffic_Countries},~\ref{f.NbUsersVsTraffic_Countries}
and~\ref{f.ThroughputVsTraffic_Countries} show the mean cell load, users
number and user's throughput versus mean traffic demand per cell for the two
considered countries. Note that for Country~3 the user's throughput
stagnates for traffic demand exceeding $800$kbit/s/cell. This is due to
congestion; some users being blocked at access as may be seen in
Figure~\ref{f.NbUsersVsTraffic_Countries}.%

\subsection{Bandwidth dimensioning}
\label{ss.dimension}
We come back now to the question stated in the title of this article:
What frequency bandwidth is required  to run cellular network
guaranteeing sufficient QoS for its users? This is the frequency
dimension problem often faced by the network operators.

Here we consider the mean user throughput in the downlink as the QoS metric.
We have remarked at the end of Section~\ref{ss.KversusDelta}
that (conjecturing) the relation~(\ref{e.fractal}) allows one to 
capture the key dependence between the mean user throughput and the
per cell traffic demand (given the network density and the frequency bandwidth)
for an inhomogeneous  network focusing only on one homogeneous type of network
area. We use for this purpose the urban area. 
Examples of the corresponding user-throughput versus traffic demand
curves have been presented on Figures~\ref{f.ThroughputVsTraffic_10712}, 
\ref{f.ThroughputVsTraffic_2600} and~\ref{f.ThroughputVsTraffic_800}.
Note that these curves depend on (increase with) the frequency bandwidth.
The solution of the frequency dimensioning problem (for a given
technology  and a reference, urban, inter-BS distance) consists in finding, 
for each  value of the traffic demand per cell, the minimal  
frequency bandwidth such that the predicted mean user throughput reaches
a given target value.

Figure~\ref{f.BandwidthVsTraffic}
plots the solutions for this problem in several considered cases.
More precisely, it gives the predicted frequency
bandwidth allowing one to assure the target $\bar{r} =5$ Mbit/s mean
user's throughput in the network,
for a given (per cell)  traffic
demand. We show solutions  for 3G and 4G networks with different frequency carriers $f$ and
different inter-BS distances $D_{\mathrm{u}}$ relative  to  the urban area.

As a possible application of the above dimensioning strategy let us
mention the following actual problem.
In Country~1, the configurations currently deployed for 4G are $f=2.6$GHz\ with $D_{\mathrm{u}%
}=1$km and $f=800$MHz\ with $D_{\mathrm{u}}=1.5$km. The operator is
wondering whether to densify the network for $f=800$MHz\ decreasing
$D_{\mathrm{u}}$ to $1$km. Figure~\ref{f.BandwidthVsTraffic}\ shows
how much bandwidth can be economized in this case.%

\section{Conclusion}

We propose a model permitting to calculate the performance of 3G and 4G
wireless cellular networks at the scale of a whole country comprising urban,
suburban and rural zones. This model relies on an observation that the
distance coefficient in the propagation loss function depends on the type of
zone in such a way that its product to the distance between neighboring base
stations remains approximately constant. It is then shown that the network
performance, in terms of the relations of the mean cell load, number
of users and user's throughput to mean traffic demand, are the same for the different types
of zones.

This theoretical result is validated by field measurements in 3G and 4G
cellular networks in various countries. Then we solve the bandwidth
dimensioning problem for 3G and 4G networks at different carrier frequencies;
i.e. we plot the frequency bandwidth as function of the traffic demand per
cell to assure a user's throughput of $5$Mbit/s. This curve is crucial for
operators to predict the frequency bandwidth required to serve the continuing
increase of traffic in the next decades.

We shall attempt in future work to extend the present approach to the uplink;
in particular to account for power control effect and the specificity of the
locations of the interferers in this case.

\begin{acknowledgement}
We thank our colleagues from Orange for fruitful exchanges related to the
present work.
\end{acknowledgement}

\begin{figure}
[t!]
\begin{center}
\includegraphics[
width=0.8\linewidth
]%
{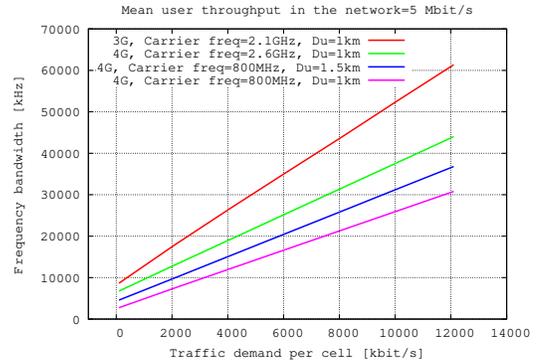}%
\caption{Bandwidth versus traffic demand per cell to assure a mean user's
throughput in the network $\bar{r}=5$Mbit/s.}%
\label{f.BandwidthVsTraffic}%
\end{center}
\vspace{-3ex}
\end{figure}

\addtocounter{section}{1}
\addcontentsline{toc}{section}{References} 
{\small
\bibliographystyle{IEEEtran}
\bibliography{Cellular,Probability}
}
\end{document}